\tiny\color{gray},
\useunder{\uline}{\ul}{}
\newcommand{\ie}{\textit{i.e.}}
\newcommand{\eg}{\textit{e.g.}}
\newcommand{\adj}{\tilde{A}}
\newcommand{\adjplus}{\tilde{A}_+}
\newcommand{\users}{\mathcal{U}}
\newcommand{\items}{\mathcal{I}}
\newtheorem{assumption}{Assumption}[section]
\begin{document}

\title[Collaborative Filtering Meets Spectrum Shift]{
Collaborative Filtering Meets Spectrum Shift: Connecting User-Item Interaction with Graph-Structured Side Information
}

\author{Yunhang He}
\authornote{Both authors contributed equally.}
\email{yhhe2004@gmail.com}
\author{Cong Xu}
\authornotemark[1]
\email{congxueric@gmail.com}
\affiliation{%
  \institution{East China Normal University}
  \city{Shanghai}
  \country{China}
}

\author{Jun Wang}
\affiliation{%
  \institution{East China Normal University}
  \city{Shanghai}
  \country{China}
}
\email{wongjun@gmail.com}

\author{Wei Zhang}
\authornote{Corresponding author.}
\affiliation{%
  \institution{East China Normal University \\ \& Shanghai Innovation Institute}
  \city{Shanghai}
  \country{China}
}
\email{zhangwei.thu2011@gmail.com}



\begin{abstract}
  Graph Neural Networks (GNNs) have demonstrated their superiority in collaborative filtering,
  where the user-item (U-I) interaction bipartite graph serves as the fundamental data format.
  However, when graph-structured side information (\eg, multimodal similarity graphs or social networks) 
  is integrated into the U-I bipartite graph,
  existing graph collaborative filtering methods fall short of achieving satisfactory performance.
  We quantitatively analyze this problem from a spectral perspective.
  Recall that a bipartite graph possesses a full spectrum within the range of [-1, 1], with the highest frequency exactly achievable at $-1$ and the lowest frequency at $1$;
  however, we observe as more side information is incorporated, 
  the highest frequency of the augmented adjacency matrix progressively shifts rightward.
  This spectrum shift phenomenon has caused previous approaches built for the full spectrum [-1, 1] 
  to assign mismatched importance to different frequencies.
  To this end,
  we propose Spectrum Shift Correction (dubbed SSC), incorporating \textit{shifting} and \textit{scaling} factors 
  to enable spectral GNNs to adapt to the shifted spectrum.
  Unlike previous paradigms of leveraging side information, 
  which necessitate tailored designs for diverse data types, 
  SSC directly connects traditional graph collaborative filtering with any graph-structured side information.
  Experiments on social and multimodal recommendation demonstrate the effectiveness of SSC, 
  achieving relative improvements of up to 23\% without incurring any additional computational overhead.
  Our code is available at https://github.com/yhhe2004/SSC-KDD.
\end{abstract}

\begin{CCSXML}
<ccs2012>
   <concept>
       <concept_id>10002951.10003317.10003347.10003350</concept_id>
       <concept_desc>Information systems~Recommender systems</concept_desc>
       <concept_significance>500</concept_significance>
       </concept>
   <concept>
       <concept_id>10010520.10010521.10010542.10010294</concept_id>
       <concept_desc>Computer systems organization~Neural networks</concept_desc>
       <concept_significance>500</concept_significance>
       </concept>
 </ccs2012>
\end{CCSXML}

\ccsdesc[500]{Information systems~Recommender systems}
\ccsdesc[500]{Computer systems organization~Neural networks}

\keywords{
Spectrum Shift,
Collaborative Filtering,
Side Information,
Multimodal,
Social Network
}

\received{February 2025}
\received[accepted]{May 2025}

\maketitle

\section{Introduction}

The vast amount of information in the Internet age has imposed a significant cognitive load on individuals.
It is essential for online service providers~\cite{chen2019behavior,el2022twhin}, 
such as short video platforms and e-commerce websites, 
to curate and deliver information streams that align closely with user preferences.
Hence, recommender systems emerge as the core technology for this purpose.
The involved recommendation engine is continuously evolving so as to fully utilize the diverse types of perpetually accumulating data.
Among these, interaction data, 
which records whether a user has previously interacted with an item, 
serves as the most commonly studied data format~\cite{rendle2012bpr,he2020lightgcn}.
Apart from this fundamental user-item data,
there also exist abundant user-user relationships~\cite{ma2008sorec,wu2020diffnet++} such as friends and colleagues,
as well as diverse item multimodal data~\cite{he2015vbpr,zhang2021lattice,zhou2023freedom} including textual descriptions and visual thumbnails.
This side information serves as valuable complements to interaction data (especially for cold-start users and items~\cite{li2019both})
when group preferences and item similarities are accurately extracted.
Then a natural research direction arises, \ie, how to utilize interaction and side information together to match with real-world recommendation scenarios.

Due to the bipartite graph nature of the user-item interaction data,
graph neural networks (GNNs)~\cite{kipf2016gcn,gilmer2017neural,wang2019ngcf}, particularly spectral GNNs~\cite{defferrard2016chebynet,gasteiger2018appnp}, 
prove the superiority of collaborative filtering.  
For instance, LightGCN~\cite{he2020lightgcn} and JGCF~\cite{guo2023jgcf} achieve competitive recommendation performance despite their simplified architectures.
However, the models in social or multimodal recommendation tend to exhibit significant complexity when integrating aforementioned side information. 
Note that in social recommendation 
the user-user social network is the most crucial side information,
while in multimodal collaborative filtering,
the additional data is typically preprocessed into item-item similarity graphs~\cite{zhang2021lattice,zhou2023freedom} based on encoded multimodal features.
To fully exploit the derived graph-structured side information, 
researchers typically develop dedicated modules~\cite{yu2023mgcn,jiang2024share} or auxiliary loss functions~\cite{wei2023mmssl,Zhou2023bm3,wang2023dsl}.
These interventions however limit the generalizability of the model to tasks 
with varying \textit{characteristics}, \textit{statistical properties}, and \textit{signal-to-noise ratios}.
In view of the effectiveness of traditional GNNs in collaborative filtering, 
we are to investigate \textbf{how to extend these spectral GNNs to integrate graph-structured side information}.

A technically feasible solution is to apply these GNNs to an augmented graph that
integrates the user-user social networks or the item-item similarity graphs into the original user-item bipartite graph.
Although certain performance improvements can be achieved due to the mitigation of the graph sparsity issue,
they remain inferior to those specialized models with high complexity~\cite{wang2023dsl,wei2023mmssl}.
Recall that spectral GNNs are inherently a series of graph filters 
designed to assign appropriate weights to the spectrum of the adjacency matrix under consideration.
For the normalized adjacency matrix derived from a user-item bipartite graph,
its spectrum spans the interval [-1, 1], with the highest frequency exactly achievable at -1
and the lowest frequency at 1.
Consequently, previous work is largely based on the partially reasonable \textit{full-spectrum assumption} 
(as illustrated in \figurename~\ref{fig-filters}),
which becomes problematic when considering an augmented graph:
As more graph-structured side information is integrated,
the highest frequency of the augmented adjacency matrix progressively shifts rightward.
This spectrum shift phenomenon is validated both empirically and theoretically, 
regardless of whether a $\kappa$-rescaling or $\kappa$-nearest-neighbors mechanism 
is employed for graph integration (see Section~\ref{section-spectrum-shift}).
Due to the mismatched spectrum, 
previous approaches designed for the full spectrum are unable to accurately assign the intended importance,
leading to suboptimal recommendation performance.

\begin{figure}[t]
  \centering
  \includegraphics[width=0.47\textwidth]{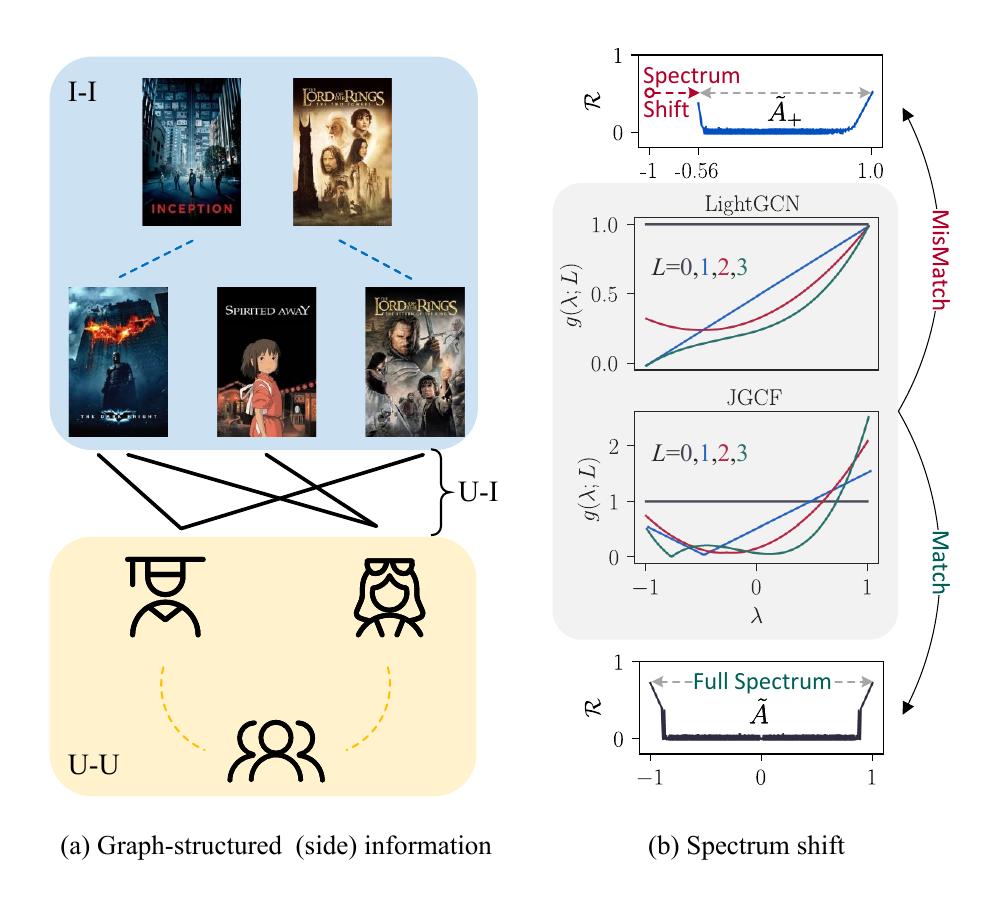}
  \caption{
    The integration of (a) graph-structured side information results in (b) spectrum shift.
    LightGCN and JGCF are tailored for the full spectrum of $\adj$, 
    which inevitably results in a mismatch when applied to the augmented adjacency matrix $\adjplus$.
  }
  \label{fig-filters}
\end{figure}

In this paper, we present Spectrum Shift Correction (SSC) to correct the mismatched spectrum. 
This approach effectively transforms the right-shifted spectrum back to the `predefined' domain
by setting appropriate shifting and scaling factors.
It exhibits the following advantages:
\textbf{1)} 
SSC over the spectrum is mathematically equivalent to applying the same operations directly to the augmented adjacency matrix.  
Therefore, SSC introduces no additional computational overhead.
\textbf{2)} 
SSC can be seamlessly integrated into existing spectral GNNs in a plug-and-play manner. 
Both LightGCN, which emphasizes low frequencies, 
and JGCF, driven by Jacobi Polynomial Bases~\cite{wang2022jacobiconv}, gain significant improvements from the application of SSC.
\textbf{3)} 
Side information often contains noise, as it typically includes more recommendation-unrelated information compared to interaction data. 
Fortunately, SSC demonstrates strong robustness to such noise, provided that the shifting and scaling factors are carefully tuned.
It is worth noting that although SSC can be covered by learnable spectral GNNs in theory, 
they struggle with training difficulties and instability in recommendation scenarios.
As evidenced by the results presented in \tablename~\ref{table-performance-multimodal} and \ref{table-performance-social}, 
LightGCN and JGCF equipped with SSC are able to achieve state-of-the-art performance across five public datasets, 
encompassing three multimodal collaborative filtering tasks and two social recommendation scenarios.

Our main contribution can be summarized as follows.
\begin{itemize}[leftmargin=*]
  \item 
  To the best of our knowledge, 
  we are the first to identify the spectrum shift phenomenon, 
  which hampers full utilization of graph-structured side information with existing spectral GNNs.
  \item 
  We propose Spectrum Shift Correction to enable the spectral GNNs to adapt to the shifted spectrum.
  This approach is model-agnostic and simple to implement with only a few lines of additional code.
  \item SSC enhances existing LightGCN and JGCF with no additional computational overhead.
  The effectiveness is validated in social and multimodal recommendation scenarios,
  achieving significant relative improvements of up to 23\%.
  Moreover, the augmented graph combined with SSC demonstrates robustness against noise within the side information.
\end{itemize}

\section{Preliminaries}

\subsection{Collaborative Filtering}

Collaborative filtering~\cite{rendle2012bpr,he2020lightgcn} is well-regarded for its remarkable recommendation performance, 
mostly relying on the sparse interaction matrix $R \in \mathbb{R}^{|\users| \times |\items|}$ between users $\users$ and items $\items$.
In this paper, 
we focus on implicit feedback; 
that is, $R_{ui} = 1$ if the user $u$ has clicked/purchased item $i$ previously, and $R_{ui} = 0$ otherwise.
A line of recommendation is to treat the interaction data as a bipartite graph, 
with the adjacency matrix defined as follows
\begin{equation*}
   A = \left 
   [ \begin{matrix}
  0 & R \\
  R^T & 0 \\
  \end{matrix} \right ] \in \{0, 1\}^{(|\users| + |\items|) \times (|\users| + |\items|)}.
\end{equation*}
Naturally,
Graph Neural Networks (GNNs) have become a prevalent technique in collaborative filtering 
due to their expertise in mining graph-structured data. 
For example, LightGCN~\cite{he2020lightgcn} is a simple yet effective graph collaborative filtering model:
\begin{equation*}
    H = \sum_{l=0}^{L} \frac{1}{L+1} \adj^l E \in \mathbb{R}^{(|\users| + |\items|) \times d},
\end{equation*} 
where $\adj = D^{-1/2} A D^{-1/2}$ denotes matrix after symmetric sqrt normalization,
and $E \in \mathbb{R}^{(|\users| + |\items|) \times d}$ is the trainable embeddings. 
Based on the final embeddings $H$, a recommendation loss function, such as the Bayesian Personalized Ranking (BPR) loss~\cite{rendle2012bpr},
is adopted to train the model.

\subsection{Spectral Graph Neural Networks}

Let $\tilde{A} = U\Lambda U^T$ represent the eigendecomposition of $\tilde{A}$,
wherein $U$ is the eigenvector matrix, 
and $\Lambda$ denotes the corresponding \textit{diagonal} matrix 
containing eigenvalues $\lambda_1 \le \cdots \le \lambda_{|\users| + |\items|}$, arranged from high-frequency to low-frequency~\cite{shen2021smoothing,guo2023jgcf}.
Intuitively, an eigenvector $U_{:, \lambda}$ corresponding to a low-frequency (\ie, larger) eigenvalue tends to assign similar values to connected nodes,
while eigenvectors associated with high-frequency (\ie, smaller) eigenvalues tend to exhibit greater variability across connected nodes.
Hence, graph signals would exhibit stronger low-frequency characteristics 
if they can predominantly be explained by eigenvectors associated with larger eigenvalues.

Spectral GNN~\cite{wang2022jacobiconv,guo2023favard} focuses on processing the graph signals in the spectral domain.
To be more specific, a spectral filter $g(\cdot)$ is \textit{element-wisely} applied to the diagonal entries of $\Lambda$:
\begin{equation}
    H = U g(\Lambda) U^T E = U \text{diag}\big(
        g(\lambda_1), \ldots, g(\lambda_{|\users| + |\items|})
    \big) U^T E.
\end{equation}
Following this paradigm, the adjustment of frequency characteristics can be highly effective.
For instance, 
the high-frequency signals can be completely eliminated through a low-pass filter $g(\lambda) = \lambda \cdot \mathbb{I}[\lambda \ge 0]$.
However, this explicit eigendecomposition of $\adj$ is computationally prohibitive, 
particularly in recommendation scenarios involving a substantial number of nodes.
Polynomial filters have thus gained popularity in spectral GNNs, 
as they can achieve a similar effect in an efficient spatial convolution manner.
Here, we illustrate the monomial of degree $L$:
\begin{equation}
    \label{eq-spectral-gnn-monomial}
    H = U g(\Lambda) U^T E = g(\adj) E = \sum_{l=0}^L \alpha_l \adj^l E, \: \text{if } g(\lambda; L) := \sum_{l=0}^L \alpha_l \lambda^l.
\end{equation}

Since $\lambda_{\min} = -1, \lambda_{\max} = 1$ holds true for any bipartite graph (see Appendix~\ref{appendix-proof-fsa} for a proof),
spectral GNNs applied in collaborative filtering are typically founded on the following full-spectrum assumption:
\begin{assumption}[Full-Spectrum Assumption]
  The spectrum of the normalized adjacency matrix spans the entire interval $[-1, 1]$.
\end{assumption}
Low- and high-frequency signals are widely recognized for their effectiveness in the majority of recommendation tasks~\cite{shen2021smoothing,peng2022less,zhu2024exploring}.
Consequently, prior successful graph collaborative filtering methods often place greater emphasis on frequencies close to -1 or 1.
It is readily observed that LightGCN is a special case of Eq.~\eqref{eq-spectral-gnn-monomial} with $\alpha_l = 1 / (L+1)$, 
promoting low-frequency signals as illustrated in \figurename~\ref{fig-filters}. 
In contrast, JGCF places additional emphasis on high-frequency signals.
Unfortunately, the full-spectrum assumption breaks down in the context of the augmented graph introduced below.
This greatly challenges the applicability of both LightGCN and JGCF.
Although recent advancements also suggest a learnable polynomial (\eg, Jacobi Polynomial Bases~\cite{wang2022jacobiconv}),
training difficulties and instability~\cite{he2020lightgcn,xu2023stablegcn} arise when applying this approach to collaborative filtering.
Hence, we are to endow existing spectral GNNs with a correction capability.

\subsection{Graph-Structured Side Information} \label{sec_graph_structured_side_information}

The real-world recommendation involves perpetually accumulating data of diverse types.
In collaborative filtering, graph-structured side information~\cite{zhang2021lattice,jiang2024share} has gained significant attention within the research community. 
Such information acts as a valuable supplement to interaction data, 
with the potential to alleviate the severe sparsity issue of bipartite graphs, 
particularly for cold-start users and items.
We observe two types of graph-structured side information frequently 
discussed in social recommendation and multimodal collaborative filtering (refer to Appendix~\ref{appendix-preprocessing} for preprocessing details), respectively.
\begin{itemize}[leftmargin=*]
    \item \textbf{U-U (user-user).} Modern media, especially social applications, encompass various types of relationships between users, such as friends or colleagues.
    The resulting social network is of particular value for recommendation 
    since the friends/colleagues may have overlapping life trajectories and similar commercial behaviors.
    Let $S_U \in \mathbb{R}^{|\users| \times |\users|}$ be the social network 
    and $S_U(\kappa) = \kappa S_U, \kappa \ge 0$ be the $\kappa$-rescaled variant used in the following.
    \item \textbf{I-I (item-item).}
    Compared to users, items generally have richer multimodal features, including textual descriptions and visual thumbnails.
    In the context of multimodal collaborative filtering, 
    it is observed that the state-of-the-art models typically integrate $\kappa$NN graphs constructed from respective modal features, 
    rather than direct utilization of the encoded features~\cite{zhang2021lattice,zhou2023freedom}. 
    For notation simplicity,
    let $S_I(\kappa), \kappa=0,1,2,\ldots$ represent the average graph over the $\kappa$NN modal similarity graphs.
\end{itemize}

Note that we introduce two commonly employed graph construction approaches here, 
namely $\kappa$-rescaling for $S_U(\kappa)$ and $\kappa$-nearest-neighbors for $S_I(\kappa)$. 
The explicit inclusion of the hyperparameter $\kappa$ enables us to quantitatively analyze the effectiveness of the graph-structured side information.
Since the side information may not always be beneficial, 
an ideal utilization method should exhibit robustness to the noise present within it 
and achieve a proper balance between the interaction data and the graph-structured side information.

\section{Spectrum Analysis on Augmented Graphs} \label{sec_augmented_graph}

To fully exploit the aforementioned graph-structured side information, 
researchers typically develop dedicated modules~\cite{zhou2023freedom,jiang2024share} or auxiliary loss functions~\cite{wei2023mmssl,wang2023dsl}.
These interventions however limit the generalizability.
For instance, 
when the multimodal similarity graph includes a larger number of $\kappa$-nearest neighbors, 
existing multimodal collaborative filtering models are significantly affected by the noise therein,
even performing worse than a simplified approach that does not utilize side information at all.

In view of the effectiveness of spectral GNNs in traditional collaborative filtering, 
we are to investigate how to extend these spectral GNNs to effectively integrate graph-structured side information.
If social networks or multimodal similarity graphs, or both, are available,
a straightforward approach is to integrate the graph-structured side information,
leading to an augmented graph:
\begin{equation}
    \label{eq-AAM}
 A_+ = \left [ \begin{matrix}
   S_{U}(\kappa) & R \\
   R^T & 0 \\
   \end{matrix} \right ]    
   \text{ or }
    \left [ \begin{matrix}
   0 & R \\
   R^T & S_{I}(\kappa) \\
   \end{matrix} \right ]    
   \text{ or }
    \left [ \begin{matrix}
   S_U(\kappa) & R \\
   R^T & S_{I}(\kappa) \\
   \end{matrix} \right ]
\end{equation}
Notably, this augmented graph holds significant value for two primary purposes:
1) It can seamlessly integrate both user-user and item-item graphs;
2) As empirically demonstrated in Section~\ref{section-exp-noise}, 
previous social or multimodal models exhibit sensitivity to noise in the side information, 
whereas the augmented graph combined with the proposed SSC demonstrates robustness against such noise.

Before delving into the details of our proposed method, 
it is necessary to first justify why previous spectral GNNs have seldom demonstrated superior performance compared to other specialized models. 
In this section, we conduct a quantitative analysis from a spectral perspective to shed light the underlying reasons. 
This analysis further inspires a simple yet effective modification to improve existing graph-based collaborative filtering methods.

\subsection{Oracle Spectrum Importance}

The design of the spectral filter should adhere to certain principles 
in order to ensure robust performance regardless of the augmented graph.
Remark that the goal of collaborative filtering is to predict the non-zero entries in the test graph $B \in \{0, 1\}^{(|\users| + |\items|) \times (|\users| + |\items|)}$,
which are the \textit{oracle} interactions (often occurred in the future) observed in the test set.
To determine the importance of frequency $\lambda$ and its corresponding eigenvector $U_{:, \lambda}$,
the absolute Rayleigh quotient\footnote{
  The Rayleigh quotient is inspired by the metric adopted in~\cite{guo2023jgcf}, while offering more intuitive insights.
}
is employed for a quantitative measurement: 
\begin{equation}
    \mathcal{R}(x; B) := \frac{|x^T B x|}{x^T x} = \frac{|\text{vec}(B) \cdot \text{vec}(xx^T)|}{x^T x} , \quad \forall \: x \in \mathbb{R}^{|\users| + |\items|},
\end{equation}
where $\text{vec}(\cdot)$ represents the vectorization operation.
Intuitively, a large $R(x; B)$ is obtained iff $x x^T$ closely resembles $B$ without regard to the sign.
In other words,
if the eigenvector $U_{:, \lambda}$ demonstrate a large Rayleigh quotient,
the final embeddings $H$ shall incorporate more $U_{:, \lambda}$ to facilitate the reconstruction of the test graph $B$.
Hence, $\mathcal{R}(U_{:, k}; B) \:, k=1,2,\ldots, |\users| + |\items|$, can be employed as an indication of the respective importance.

\subsection{Spectrum Shift Phenomenon}
\label{section-spectrum-shift}
In this part, we analyze from a spectral perspective 
why previous graph collaborative filtering models fail to deliver satisfactory results on the augmented graph.

\begin{figure}[t]
  \centering
  \includegraphics[width=0.47\textwidth]{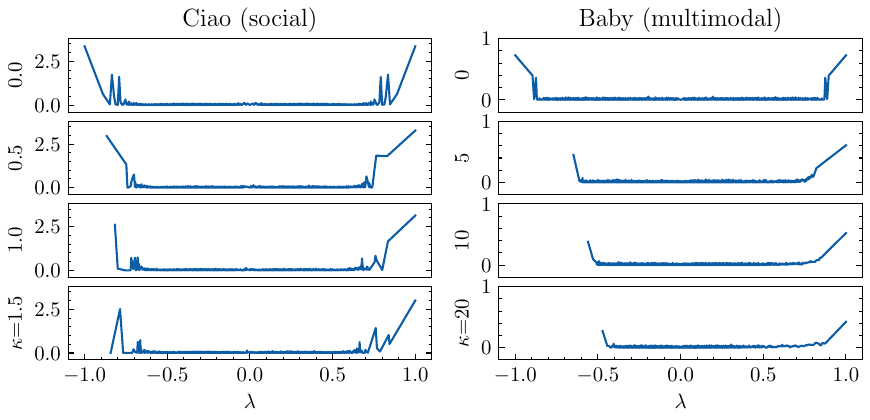}
  \caption{
    Oracle spectrum importance $\mathcal{R}(U_{:, \lambda}; B)$ as more side information is integrated (\ie, $\kappa$ increases).
  }
  \label{fig-oracle-spectrum}
\end{figure}

\textbf{Empirical observations on real-world datasets.}
We investigate how the absolute Rayleigh quotient $\mathcal{R}(U_{:,\lambda}; B)$ evolves 
as additional side information is integrated into the augmented adjacency matrix $\adjplus$. 
This exploration may reveal a universal framework for generalized graph collaborative filtering. 
Specifically, we consider an augmented adjacency matrix $\adjplus$ that integrates either $S_U(\kappa)$ or $S_I(\kappa)$, 
as they represent the two most prevalent approaches: 
$\kappa$-rescaling (for U-U) and $\kappa$-nearest-neighbors (for I-I). 
We have the following two key findings from \figurename~\ref{fig-oracle-spectrum}:
\begin{itemize}[leftmargin=*]
  \item 
  The full spectrum, spanning the interval [-1, 1], 
  undergoes compression with a fixed point at $\lambda_{\max} \equiv 1$. 
  As $\kappa$ increases, the minimum eigenvalue $\lambda_{\min}$ shifts rightward 
  regardless of $S_U(\kappa)$ or $S_I(\kappa)$ integrated.
  Therefore, the full-spectrum assumption no longer holds true for an augmented graph.
  \item
  In comparison to $S_U$, a more significant rightward shift is observed when $S_I$ is incorporated.
  The $\kappa$-nearest-neighbors mechanism appears to result in a more pronounced mismatch for spectral GNNs,
  consequently worse performance.
\end{itemize}

\textbf{Theoretical analysis.}
We theoretically justify the spectrum shift phenomenon 
and elucidate the distinctions between $\kappa$-rescaling and $\kappa$-nearest neighbors.
\begin{theorem}
  \label{thm-spectrum-shift}
  Let $S(\kappa) \in \mathbb{R}^{|\mathcal{U}| \times |\mathcal{U}|}$ denote a symmetric matrix with non-zero row sums.
  For the augmented adjacency matrix $\adjplus$,
  its maximum eigenvalue value $\lambda_{\max}(\kappa) \equiv 1$.
  Moreover, when $\kappa \rightarrow +\infty$
  \begin{itemize}[leftmargin=*]
    \item If $S(\kappa) = \kappa S$ is defined in a $\kappa$-rescaling manner, it follows that
    \begin{equation}
      \lambda_{\min}(\adjplus(\kappa)) = \min(\lambda_{\min} (\tilde{S}), 0).
    \end{equation}
    \item If $S(\kappa)$ is defined in a $\kappa$-nearest-neighbors manner, it follows that
    \begin{equation}
      \lambda_{\min}(\adjplus(\kappa)) \approx  0.
    \end{equation}
  \end{itemize}
  The same results can be drawn for I-I graphs.
\end{theorem}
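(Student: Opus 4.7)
The plan is to separate the theorem into three sub-claims, tackling the $\lambda_{\max} \equiv 1$ statement first and then the two $\lambda_{\min}$ asymptotics via block-matrix analysis. The common thread is that all three computations only require the degree factorization of $A_+(\kappa)$ under the symmetric normalization, so the same algebra drives everything.

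For $\lambda_{\max}(\adjplus(\kappa)) \equiv 1$, I would invoke the standard Perron--Frobenius argument for symmetrically normalized adjacency matrices of non-negative symmetric matrices. Writing $\adjplus(\kappa) = D(\kappa)^{-1/2} A_+(\kappa) D(\kappa)^{-1/2}$, where $D(\kappa)$ is the degree (row-sum) matrix of $A_+(\kappa)$, the non-zero row-sum hypothesis makes $D(\kappa)$ invertible. A direct check shows that $D(\kappa)^{1/2}\mathbf{1}$ satisfies $\adjplus(\kappa)\,D(\kappa)^{1/2}\mathbf{1} = D(\kappa)^{1/2}\mathbf{1}$, so $1$ is an eigenvalue. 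Since $A_+(\kappa)$ has non-negative entries, the similar matrix $D(\kappa)^{-1}A_+(\kappa)$ is row-stochastic with spectral radius exactly $1$, pinning $\lambda_{\max}(\adjplus(\kappa))$ to $1$ for every $\kappa$.

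For the $\kappa$-rescaling limit, I would decompose blockwise. With $A_+(\kappa) = \left[\begin{smallmatrix} \kappa S & R \\ R^T & 0\end{smallmatrix}\right]$ the degree matrix factors as $D(\kappa) = \mathrm{diag}(\kappa D_S + D_R^U,\, D_R^I)$, so the (user, user) block of $\adjplus(\kappa)$ is $(\kappa D_S + D_R^U)^{-1/2}\,\kappa S\,(\kappa D_S + D_R^U)^{-1/2}$, which converges entrywise to $\tilde{S} := D_S^{-1/2} S D_S^{-1/2}$ as $\kappa \to +\infty$; the off-diagonal blocks carry a prefactor $(\kappa D_S + D_R^U)^{-1/2}$ that decays like $O(\kappa^{-1/2})$; and the (item, item) block is identically zero. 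Hence $\adjplus(\kappa) \to \left[\begin{smallmatrix}\tilde{S} & 0 \\ 0 & 0\end{smallmatrix}\right]$ in operator norm, whose spectrum is exactly $\mathrm{spec}(\tilde{S}) \cup \{0\}$. Applying Weyl's perturbation inequality to the minimum eigenvalue yields $\lambda_{\min}(\adjplus(\kappa)) \to \min(\lambda_{\min}(\tilde{S}), 0)$, as claimed. The I-I variant is identical after swapping the roles of $R$ and $R^T$.

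For the $\kappa$-nearest-neighbors limit, the argument is subtler because $\kappa$ now tunes the \emph{density} of edges rather than a global scale. My plan is to show that as $\kappa$ grows, the degrees of $S(\kappa)$ grow linearly in $\kappa$, so the same asymptotic decoupling as in the rescaling case applies, reducing the problem to identifying $\lim_\kappa \lambda_{\min}(\tilde{S}(\kappa))$. As $\kappa$ approaches $|\users|-1$, the $\kappa$NN similarity graph saturates to the complete similarity graph, and after symmetric normalization this tends to the rank-one Perron projector $\mathbf{1}\mathbf{1}^T/|\users|$ whose spectrum is $\{1, 0, \ldots, 0\}$, giving $\lambda_{\min}(\tilde{S}(\kappa)) \to 0$ and hence $\lambda_{\min}(\adjplus(\kappa)) \approx 0$. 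The main obstacle lies precisely here: turning the rank-one saturation into a quantitative bound requires a distributional assumption on the modal similarity values (e.g.\ entrywise boundedness between positive constants), which is why the statement reads $\approx 0$ rather than an exact equality; I would phrase the result as the leading-order asymptotic under mild regularity of the similarity matrix.
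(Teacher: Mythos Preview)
Your proposal is correct and follows essentially the same route as the paper: exhibit $D_{A_+}^{1/2}\mathbf{1}$ as an eigenvector for $\lambda_{\max}=1$, then show that $\adjplus(\kappa)$ converges to the block-diagonal matrix $\mathrm{diag}(\tilde{S},0)$ (rescaling) or $\mathrm{diag}(|\mathcal{U}|^{-1}\mathbf{1}\mathbf{1}^T,0)$ ($\kappa$NN) and read off the limiting spectrum. Your version is in fact somewhat more careful than the paper's---you make the $O(\kappa^{-1/2})$ decay of the off-diagonal blocks explicit, invoke Weyl's inequality for eigenvalue continuity, and correctly flag that the $\kappa$NN claim is a leading-order heuristic requiring mild regularity of the similarity entries---whereas the paper simply asserts the limiting form of $\adjplus$ and appeals to the spectrum of the limit.
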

\begin{proof}
  The results follow from the observation that $\adjplus(\kappa)$ converges to an adjacency matrix determined by $\tilde{S}$ (w.r.t. $\kappa$-rescaling) or a positive semidefinite matrix (w.r.t. $\kappa$-nearest-neighbors) as $\kappa \rightarrow +\infty$. Additional details can be found in Appendix~\ref{appendix-proof-ss}.
\end{proof}
Note that both social networks and multimodal $\kappa$NN graphs are generally more homogeneous than bipartite graphs, 
thereby resulting in a minimum eigenvalue strictly greater than -1. 
Furthermore, 
there exist significant differences of the application of $S_U$ and $S_I$,
primarily due to their distinct construction methods rather than differences in user/item side information.

\textbf{Foreseeable implication.}
Because of the mismatch between the shifted spectrum and the full-spectrum assumption,
which is fundamental to existing spectral GNNs in collaborative filtering, 
these models fail to fully leverage the side information, 
ultimately leading to performance inferior to that of specialized models.
Of course, one could repeatedly conduct spectrum analysis and redesign appropriate graph filters, as demonstrated by LightGCN and JGCF over the full spectrum. 
This protocol is computational prohibitive and far from practice.

\section{Spectrum Shift Correction}

\begin{algorithm}[t]
  \caption{
    Python-style algorithm for factor estimation
  }
  \label{alg-factor}
  \begin{lstlisting}[basicstyle=\small\ttfamily]
max_eigen_val = 1.
min_eigen_val = 1 - power_iteration(I - A, 
                num_iterations=100)
mu = (min_eigen_val + max_eigen_val) / 2
delta = (max_eigen_val - min_eigen_val) / 2
\end{lstlisting}
\end{algorithm}

\subsection{Simple yet Effective Method}

\begin{table}[h]
  \centering
  \caption{
    Learnable graph filters versus LightGCN.
  }
  \label{table-learnable}
\begin{tabular}{lcccc}
  \toprule
\multicolumn{1}{c}{} & \multicolumn{2}{c}{Ciao}                                              & \multicolumn{2}{c}{Baby}                                              \\
\multicolumn{1}{c}{} & R@20 & N@20 & R@20 & N@20 \\
\midrule
ChebyNet             & 0.0722                            & 0.0375                            & 0.0609                            & 0.0259                            \\
JacobiConv           & 0.0726                            & 0.0379                            & 0.0716                            & 0.0316                            \\
\midrule
LightGCN             & 0.0770                            & 0.0381                            & 0.0851                            & 0.0372                           \\
\quad \quad +SSC             & 0.0861                            & 0.0431                            & 0.1044                            & 0.0458                           \\
\bottomrule
\end{tabular}
\end{table}

While the ideal solution for the aforementioned challenge 
is to learn optimal (polynomial) graph filters~\cite{chien2021gprgnn,wang2022jacobiconv,guo2023jgcf,guo2023favard} automatically, 
training difficulties~\cite{he2020lightgcn} are encountered in collaborative filtering, 
leading to an unstable training process~\cite{xu2023stablegcn} 
and sometimes inferior performance (see \tablename~\ref{table-learnable}).
Here we introduce a simple but effective method of Spectrum Shift Correction (dubbed SSC) to endow existing unlearnable graph filters with a correction capability.
Specifically, SSC transforms shifted spectrum back to the pre-defined domain of $g$ as follows
\begin{equation}
  \label{eq-SSC-eigenvalue}
  g_+(\lambda) = g \circ \phi(\lambda), \quad \phi(\lambda) := \frac{\lambda - \mu}{\Delta},
\end{equation}
where $\mu \in [0, 1]$ and $\Delta \in (0, 1]$ are the \textit{shifting} and \textit{scaling} factors, respectively.
As can be seen,
the full spectrum [-1, 1] could be `recovered' if $\mu = (\lambda_{\max} + \lambda_{\min}) / 2$ and $\Delta = (\lambda_{\max} - \lambda_{\min}) / 2$.
For example, if the current spectrum spans the interval [-0.6, 1], 
setting $\mu = 0.2$ and $\Delta = 0.8$ exactly restores the adjusted spectrum to the range [-1, 1].
In this case, the modified graph filters can be expected to associate different frequencies with intended importance.

It is worth mentioning that the explicit eigendecomposition is not required.
Eq.~\eqref{eq-SSC-eigenvalue} is equivalent to the SSC operation directly applying to the augmented adjacency matrix $\adjplus$; that is,
\begin{equation} 
  \label{eq-SSC-adj}
  \phi(\adjplus) = \frac{1}{\Delta} (\adjplus - \mu I).
\end{equation}
In this way, 
there is no need to redesign $S_U(\kappa)$- or $S_I(\kappa)$-specific filters for various tasks,
and thus existing mature spectral neural networks in collaborative filtering can be used to effectively
integrate graph-structured side information.

\textbf{Factor estimation.}
The shifting and scaling factors can be estimated automatically by leveraging approximation algorithms.
In the following, we illustrate how this procedure can be implemented efficiently using the classical power method.
Since $\lambda_{\max}(\tilde{A}) \equiv 1$, we are to estimate the minimum eigenvalue according to the Python-style pseudocode~\ref{alg-factor},
wherein \texttt{power\_iteration} is an eigenvalue algorithm that returns the largest (in absolute value) eigenvalue of the given matrix. Accordingly, we have:
$$
1 - \text{power\_iteration}(I - \tilde{A}_+) \approx 1 - (1 - \lambda_{\min}) = \lambda_{\min},
$$
and the `optimal' $\mu$ and $\Delta$.
Remark that an exact spectral correction may not necessarily be optimal. 
This is because the spectrum shift induced by the augmented graph is distributionally similar to a shifting and scaling transformation, but not completely equivalent. Indeed, the real transformation is considerably more intricate. Nonetheless, the estimated values of $\mu$ and $\Delta$ yield acceptable performance.
To achieve optimal results, it is necessary to retune them in the vicinity of the estimated values.

In a word, our method maintain any heuristically designed graph filters unchanged 
and make it match with the shifted oracle spectrum importance of augmented graph by shifting and scaling the augmented adjacency matrix.

\begin{algorithm}[t]
  \caption{
    Spectral GNNs equipped with SSC
  }
  \label{alg-convshift}
  \KwIn{
    spectrum shifting parameter $\mu$;
    spectrum scaling parameter $\Delta$;
    bipartite interaction graph $\adj$; 
    side information intensity $\kappa$;
    original graph filter $g(\cdot)$
  }

  Construct augmented graph $A_+$ by integrating $S_U(\kappa)$ or $S_I(\kappa)$ into the bipartite graph $A$;

  Obtain $\adjplus$ by performing symmetric sqrt normalization;

  Conduct SSC operations according to Eq.~\eqref{eq-SSC-adj};

  Model training based on the modified graph filter $g_+$.

  \KwOut{Final embeddings $H_+ = g_+ (\adjplus) E$}
\end{algorithm}

\subsection{Applications} \label{sec_adverse_examples}

\begin{figure}[t]
  \centering
  \includegraphics[width=0.47\textwidth]{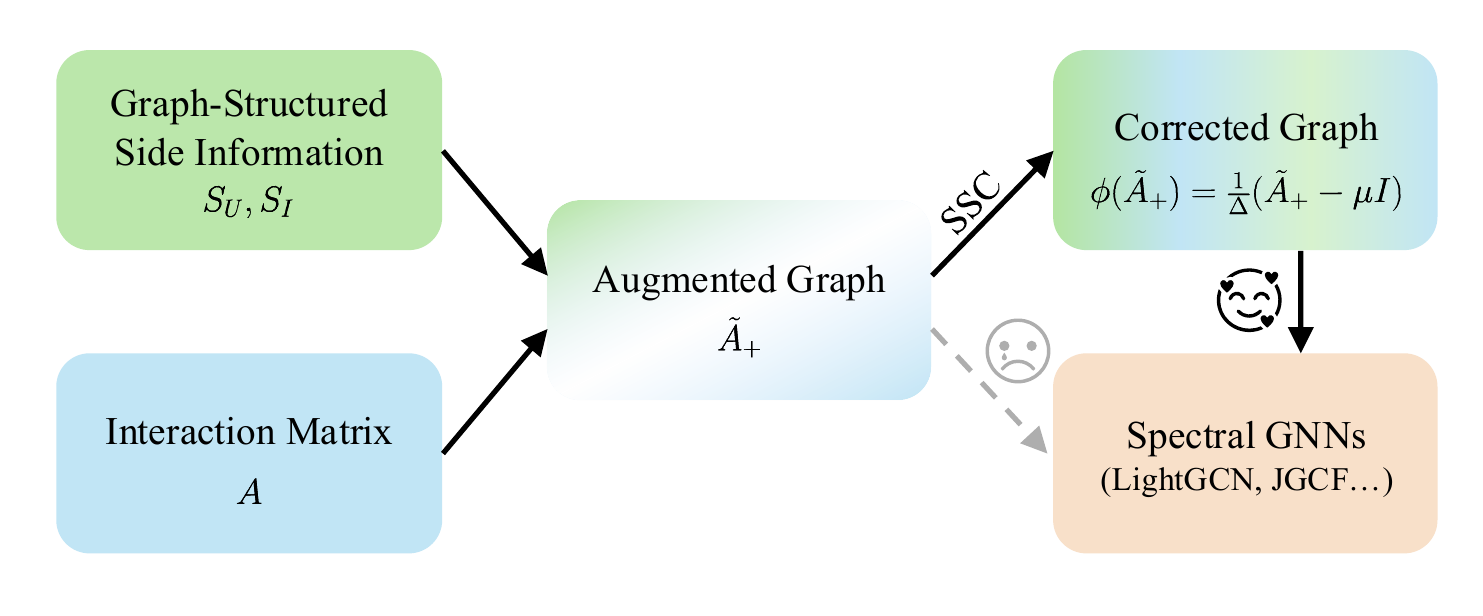}
  \caption{
    Pipeline of SSC.
  }
  \label{fig-pipeline}
\end{figure}

The universal procedures for boosting spectral GNNs are detailed in Algorithm~\ref{alg-convshift} and \figurename~\ref{fig-pipeline}.
Here, we introduce two specific examples.

\textbf{LightGCN} is a $L$-layer graph convolution network with $\alpha_l = 1 / (L + 1)$.
Applying SSC yields a variant as follows
\begin{equation}
  H = \sum_{l=0}^{L} \frac{1}{L+1} 
  \left(
    \frac{1}{\Delta}(\adjplus - \mu I)
  \right)^l E 
  \in \mathbb{R}^{(|\users| + |\items|) \times d}.
\end{equation}

\textbf{JGCF} further emphasizes the high frequencies based on the Jacobi Polynomial Bases~\cite{wang2022jacobiconv}:
\begin{align*}
  \mathbf{P}_l^{a, b}(\lambda) = (\theta_l \lambda + \theta_l') \mathbf{P}_{l-1}^{a, b}(\lambda) - \theta_{l}'' \mathbf{P}_{l-2}^{a, b}(\lambda), \: l \ge 2\\
  \mathbf{P}_0^{a, b}(\lambda) = 1, \quad
  \mathbf{P}_1^{a, b}(\lambda) = \frac{a - b}{2} + \frac{a + b + 2}{2} \lambda.
\end{align*}
Notably, $\theta_l, \theta_l', \theta_l''$ can be explicitly computed according to the shape hyperparameters $(a, b)$.
$\mathbf{P}_l^{a, b}(l), l=0, \ldots, L$ are orthogonal with weight function $(1 - \lambda)^a (1 + \lambda)^b$ on the interval [-1, 1].
Through spectrum shift correction, this orthogonality can potentially be preserved.
The resulting improved JGCF (the band-stop part) becomes
\begin{equation}
  H = \sum_{l=0}^L \frac{1}{L+1} \mathbf{P}_l^{a, b}
  \left(
    \frac{1}{\Delta}(\adjplus - \mu I)
  \right) E. 
\end{equation}

Given the embeddings $H$ derived from the modified LightGCN or JGCF, 
the relevance score between a user $u$ and an item $j$ can be calculated using the inner product:  
\begin{equation}
  y_{ui} = h_u^T h_i, \quad \forall u \in \mathcal{U}, i \in \mathcal{I}.
\end{equation}  
Items ranked at the top are then considered as promising candidates for recommendation.

\section{Experiments}

In this section, we evaluate SSC on 2 backbones and 5 datasets including multimodal recommendation and social recommendation.
The aim is to study the following research questions:
\begin{itemize}[leftmargin=*]
    \item RQ1: How does SSC perform when applied to spectral collaborative filtering methods?
    \item RQ2: What is the computational cost of SSC? Is it practical to apply it to spectral methods?
    \item RQ3: Does SSC effectively address spectrum shift and restore a full-spectrum alignment?
    \item RQ4: How well does SSC resist noise when integrating collaborative and side information, compared to state-of-the-art GNN-based methods?
\end{itemize}

\begin{table}[]
  \centering
  \caption{
    Dataset statistics.
    `\#U-I': the number of interactions.
    `\#U-U': the number of user-user edges;
    `\#I-I' is omitted as it is determined by the hyperparameter $\kappa$.
  }
\begin{tabular}{ccccc}
  \toprule
            & \#Users & \#Items & \#U-I     & \#U-U   \\
  \midrule
Ciao        & 6,804   & 17,660  & 139,112   & 106,433 \\
LastFM      & 1,875   & 4,613   & 77,465    & 25,182  \\
  \midrule
Baby        & 19,445  & 7,050   & 160,792   &   -     \\
Sports      & 35,598  & 18,357  & 296,337   &   -     \\
Electronics & 192,403 & 63,001  & 1,689,188 &   -     \\
  \bottomrule
\end{tabular}
\end{table}

\begin{table*}[]
  \centering
  \caption{
    Overall performance comparison for multimodal recommendation.
    The best results are marked in \textbf{bold}.
    Paired t-test is performed over 5 independent runs for evaluating $p$-value ($\;^*$ indicates statistical significance for $p \le 0.05$ , and $\;^{**}$ for $p \le 0.01$).
    Symbol $\blacktriangle$\% stands for the relative improvement against the backbones.
    `-' indicates that the method cannot be performed with an RTX 3090 GPU.
    The `Time' column reports the total training time.
  }
  \label{table-performance-multimodal}
  \scalebox{.8}{
    \setlength{\tabcolsep}{1.2mm}{
    \begin{tabular}{l|ccccc|ccccc|ccccc}
    \toprule
    & \multicolumn{5}{c|}{Baby} & \multicolumn{5}{c|}{Sports} & \multicolumn{5}{c}{Electronics} \\
     & R@10 & R@20 & N@10 & N@20 & Time & R@10 & R@20 & N@10 & N@20 & Time & R@10 & R@20 & N@10 & N@20 & Time \\
    \midrule
    MF-BPR & 0.0417 & 0.0638 & 0.0227 & 0.0284 & 04m15s & 0.0586 & 0.0885 & 0.0324 & 0.0401 & 09m59s & 0.0372 & 0.0557 & 0.0208 & 0.0256 & 45m17s \\
    \midrule
    \midrule
    ChebyNet & 0.0383 &	0.0609 & 0.0201	& 0.0259 & 05m10s & 0.0366 & 0.0564 & 0.0202 & 0.0253 & 09m37s & 0.0201 & 0.0309 & 0.0104 & 0.0132 & 133m19s\\
    JacobiConv & 0.0456	& 0.0716 & 0.0249 & 0.0316 & 05m14s & 0.0471 & 0.0730 & 0.0262 & 0.0329 & 09m06s & 0.0281 & 0.0418 & 0.0160 & 0.0195 & 152m54s \\
    \midrule
    \midrule
    MMGCN & 0.0353 & 0.0580 & 0.0189 & 0.0248 & 16m13s & 0.0331 & 0.0529 & 0.0179 & 0.0230 & 55m08s & 0.0217 & 0.0339 & 0.0117 & 0.0149 & 354m46s \\
    LATTICE & 0.0562 & 0.0875 & 0.0308 & 0.0388 & 10m57s & 0.0683 & 0.1024 & 0.0379 & 0.0466 & 82m45s & - & - & - & - & - \\
    BM3 & 0.0559 & 0.0866 & 0.0306 & 0.0385 & 06m35s & 0.0647 & 0.0980 & 0.0358 & 0.0444 & 16m28s & 0.0417 & 0.0631 & 0.0233 & 0.0289 & 149m07s \\
    FREEDOM & 0.0649 & 0.0991 & 0.0346 & 0.0434 & 07m22s & 0.0715 & 0.1088 & 0.0383 & 0.0479 & 16m53s & 0.0427 & 0.0647 & 0.0239 & 0.0295 & 129m32s\\
    MMSSL & 0.0595 & 0.0929 & 0.0350 & 0.0442 & 28m22s & 0.0667 & 0.1001 & 0.0390 & 0.0482 & 223m54s & - & - & - & - & - \\
    \midrule
    \midrule
    LightGCN & 0.0543 & 0.0851 & 0.0293 & 0.0372 & 04m42s & 0.0633 & 0.0956 & 0.0349 & 0.0432 & 11m35s & 0.0393 & 0.0579 & 0.0224 & 0.0272 &59m20s \\
    \quad \quad +SSC & 0.0666$^{**}$ & \textbf{0.1044}$^{**}$ & 0.0361$^{**}$ & \textbf{0.0458}$^{**}$ & 04m48s & 0.0762$^{**}$ & 0.1135$^{**}$ & 0.0417$^{**}$ & 0.0513$^{**}$ & 11m48s & \textbf{0.0452}$^{**}$ & \textbf{0.0677}$^{**}$ & \textbf{0.0254}$^{**}$ & \textbf{0.0312}$^{**}$ & 77m12s \\
    \multicolumn{1}{c|}{$\blacktriangle$\%} &22.61\%	&22.69\%	&23.33\%	&23.20\%	& -	&20.40\%	&18.77\%	&19.43\%	&18.78\%	& -	&15.01\%	&16.92\%	&13.31\%	&14.68\% & - \\
    \midrule
    JGCF & 0.0563 & 0.0874 & 0.0305 & 0.0385 & 05m06s & 0.0664 & 0.0999 & 0.0369 & 0.0456 & 09m21s & 0.0425 & 0.0618 & 0.0244 & 0.0294 & 140m10s\\ 
    \quad \quad +SSC & \textbf{0.0672}$^{**}$ & 0.1019$^{**}$ & \textbf{0.0364}$^{**}$ & 0.0453$^{**}$ & 05m16s & \textbf{0.0783}$^{**}$ & \textbf{0.1177}$^{**}$ & \textbf{0.0427}$^{**}$ & \textbf{0.0529}$^{**}$ & 09m36s & 0.0443$^{**}$ & 0.0648$^{**}$ & 0.0251$^{**}$ & 0.0304$^{**}$ & 166m34s \\
    \multicolumn{1}{c|}{$\blacktriangle$\%} & 19.34\% & 16.55\% & 19.31\% & 17.71\% & - & 17.92\% & 17.82\% & 15.73\% & 15.94\% & -  & 4.29\%	& 4.93\%	& 2.86\%	& 3.45\% & - \\
    \bottomrule
    \end{tabular}
    }
}
\end{table*}

\begin{figure*}[t]
  \centering

  \begin{minipage}[t]{0.7\textwidth}
  \centering
  \captionof{table}{
    Overall performance comparison for social recommendation.
  }
  \label{table-performance-social}
  \scalebox{.8}{
    \setlength{\tabcolsep}{1.2mm}{
    \begin{tabular}{l|ccccc|ccccc}
    \toprule
    & \multicolumn{5}{c|}{Ciao} & \multicolumn{5}{c}{LastFM} \\
     & R@10 & R@20 & N@10 & N@20 & Time & R@10 & R@20 & N@10 & N@20 & Time \\
    \midrule
    MF-BPR     & 0.0450 & 0.0737 & 0.0269 & 0.0353 & 03m44s & 0.2357 & 0.3220 & 0.2165 & 0.2513 & 02m15s \\
    \midrule
    \midrule
    ChebyNet   & 0.0466 & 0.0722 & 0.0300 & 0.0375 & 04m43s & 0.2052 & 0.2875 & 0.1848 & 0.2180 & 02m45s \\
    JacobiConv & 0.0468 & 0.0726 & 0.0303 & 0.0379 & 05m09s & 0.2177 & 0.3024 & 0.1966 & 0.2309 & 03m33s \\
    \midrule
    \midrule
    DiffNet++ & 0.0482       & 0.0774       & 0.0293       & 0.0379    &12m08s & 0.2434 & 0.3320 & 0.2245 & 0.2604 & 06m12s \\
    DSL       & 0.0532 & 0.0827       & 0.0323 & 0.0409    & 26m30s & 0.2495 & 0.3375 & 0.2302 & 0.2659 & 17m22s \\
    SHaRe     & 0.0519       & 0.0827 & 0.0319       & 0.0409  & 28m52s & 0.2442 & 0.3352 & 0.2228 & 0.2595 & 03m47s \\
    \midrule
    \midrule
    LightGCN   & 0.0473 & 0.0770 & 0.0295 & 0.0381 & 04m28s & 0.2527 & 0.3430 & 0.2326 & 0.2690 & 02m39s \\
    \quad \quad +SSC & \textbf{0.0547}$^{**}$ & \textbf{0.0861}$^{**}$ & \textbf{0.0340}$^{**}$ & \textbf{0.0431}$^{**}$ & 04m28s & \textbf{0.2588}$^{**}$ & 0.3502$^{**}$ & \textbf{0.2406}$^{**}$ & 0.2772$^{**}$ & 02m34s \\
    \multicolumn{1}{c|}{$\blacktriangle$\%} &15.71\%	&11.80\%	&15.30\%	& 13.06\%	& -	&2.43\%	&2.10\%	&3.42\%	&3.05\% & - \\
    \midrule
    JGCF       & 0.0498 & 0.0778 & 0.0316 & 0.0397 & 04m46s & 0.2555 & 0.3465 & 0.2369 & 0.2736 & 03m49s \\
    \quad \quad +SSC & 0.0543$^{**}$ & 0.0803$^{**}$ & 0.0336$^{**}$ & 0.0413$^{**}$ & 05m00s & 0.2583 & \textbf{0.3504}$^{*}$ & 0.2402$^{*}$ & \textbf{0.2772}$^{**}$ & 04m21s \\
    \multicolumn{1}{c|}{$\blacktriangle$\%} & 9.04\% & 3.20\% & 6.19\% & 3.91\% & - & 1.10\% & 1.12\% & 1.41\% & 1.32\% & - \\
    \bottomrule
    \end{tabular}
    }
    }
  \end{minipage}
  \hfill
  \begin{minipage}[t]{0.26\textwidth}
    \vspace{0.8em}
    \centering
    \includegraphics[width=\textwidth]{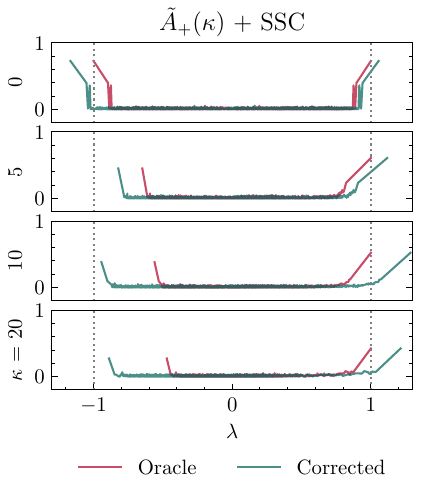}
    \captionof{figure}{Effect of SSC.}
    \label{fig-SSC-visualization}
  \end{minipage}

\end{figure*}

\subsection{Experiment Setup}

\textbf{Datasets.}
Experiments are conducted under two scenarios, 
encompassing social and multimodal recommendation.
For social datasets, we consider two real-world datasets, 
including Ciao\footnote[1]{https://www.cse.msu.edu/{\textasciitilde}tangjili/datasetcode/truststudy.htm}\cite{tang2012ciao1}\cite{tang2012ciao2} and LastFM\footnote[2]{http://www.lastfm.com}\footnote[3]{https://github.com/Sherry-XLL/Social-Datasets/tree/main/lastfm}\cite{cantador2011lastfm}.
For data quantity consideration, we remove nodes with less than 3 interactions.
The user-user social networks are symmetrically derived from social relationships.
For multimodal datasets, three widely used datasets are considered, including Amazon Baby, Sports, and Electronics\footnote[4]{
https://github.com/enoche/MMRec
}, published by \cite{he2016ups}. 
We extract user-item interaction from the positive review histories (rating larger or equal to 4.0/5.0) and remove nodes with less than 5 interactions. 
Following \cite{zhou2023mmrec}, 
we use CNN-encoded thumbnails of each item as the visual feature 
and use Transformer (all-MiniLM-L6-v2) encoded descriptions as the textual feature.
For both scenarios, we split the preprocessed datasets into training sets, validation sets, and test sets with the ratio 8:1:1.

\textbf{Baselines.}
To validate the effectiveness of SSC,
we compare it with several research lines: 

\noindent I) traditional collaborative filtering methods:

\begin{itemize}[]
  \item MF-BPR~\cite{rendle2012bpr}: directly using a latent embedding to model each user and item and optimize by Bayesian pair-wise loss.
  \item LightGCN~\cite{he2020lightgcn}: A simplified recommendation framework based on Graph Neural Network and can be seen as a fixed polynomial spectral filter.
  \item JGCF~\cite{guo2023jgcf}: utilize Jacobi basis to filter low-frequency and high-frequency signals, making it fit for recommendation task.
\end{itemize}

\noindent II) spectral GNNs with learnable polynomial filters:

\begin{itemize}[]
  \item ChebyNet~\cite{defferrard2016chebynet}: A learnable graph filter on Chebyshev basis.
  \item JacobiConv~\cite{wang2022jacobiconv}: A learnable graph filter on Jacobi basis. The main difference with JGCF is the learnable parameter.
\end{itemize}

\noindent III) state-of-the-art multimodal collaborative filtering methods:

\begin{itemize}[]
  \item MMGCN~\cite{wei2019mmgcn}: aggregate modality-specific representation on each graph to incorporate multimodal features.
  \item LATTICE~\cite{zhang2021lattice}: construct learnable item-item similarity graph using multimodal features and capture modality information by item convolution on such graph.
  \item BM3~\cite{Zhou2023bm3}: generate contrastive views by dropout mechanism and inject modality features by alignment loss.
  \item FREEDOM~\cite{zhou2023freedom}: freeze the knn similarity graph introduced by LATTICE and propose a dynamic graph sampling strategy for the collaborative convolution.
  \item MMSSL~\cite{wei2023mmssl}: a self-supervised learning framework for multimodal recommendation aiming to enhance modality-aware user preference and cross-modal dependencies.
\end{itemize}

\noindent IV) state-of-the-art social collaborative filtering methods:

\begin{itemize}[]
  \item DiffNet++~\cite{wu2020diffnet++}: capture high-order collaborative interests and heterogeneous social influence on separate graphs, and then combine them attentively.
  \item DSL~\cite{wang2023dsl}: a typical self-supervised framework in social recommendation, which generates social view and collaborative view on each graph.
  \item SHaRe~\cite{jiang2024share}: rewrite user-user social graph dynamically to extract useful information. We implement the LightGCN + social + SHaRe version mentioned in their report.
\end{itemize}

\textbf{Evaluation metrics.}
Based on the top $N$ ranked candidate items determined by the relevance scores, 
we employ two popular metrics in recommender systems for performance evaluation: Recall@$N$ and NDCG@$N$. 
Specifically, Recall emphasizes the model's ability to retrieve positive items, 
while NDCG, which stands for Normalized Discounted Cumulative Gain, 
places greater emphasis on the ranking quality of positive items.
Following previous works~\cite{zhou2023freedom,wang2023dsl},
the values $N=10$ and $N=20$ are investigated in the subsequent analysis.

\textbf{Implementation details.}
For a fair comparison, we adopt the Adam optimizer~\cite{kingma2014adam} for training and fix embedding dimension to 64, batch size to 2048. 
We tune learning rates of all methods within \{1.e-4, 5.e-4, 1.e-3, 5.e-3\}. 
Other involved model/loss hyperparameters are tuned in accordance with the guidelines suggested by the baselines.
To enhance the spectral GNNs (\eg, LightGCN, JGCF) with the proposed SSC,
we retain the previously optimized hyperparameters and perform joint tuning of $\kappa$, $\mu$, and $\Delta$.
We search $\kappa$ from \{0, 5, 10, 15, 20\} for multimodal recommendation and \{0.0, 0.25, 0.5, 0.75, 1.0, 1.25, 1.5\} for social recommendation.
We tune $\mu$ within the range of 0.0 to 0.4 with a step length of 0.05, and $\Delta$ within the range of 0.4 to 1.0 with a step length of 0.05.

\subsection{Overall Performance Comparison (RQ1 \& RQ2)}
\label{overall-performance}
We present the overall performance of SSC on LightGCN and JGCF in Table~\ref{table-performance-multimodal} and~\ref{table-performance-social}, from which we have the following observations.
\begin{itemize}[leftmargin=*]
    \item 
    As a typical collaborative filtering method, LightGCN demonstrates significantly stronger performance compared to MF-BPR. For spectral collaborative filtering, JGCF surpasses LightGCN by effectively manipulating both low-frequency and high-frequency signals. As previously mentioned (see Table~\ref{table-learnable}), learnable graph filters (\eg, ChebyNet, JacobiConv) are not effective in recommender systems. When side information is incorporated, most baseline models outperform pure collaborative filtering methods, illustrating the utility of side information.
    \item 
    In multimodal recommendation, although most state-of-the-art baselines (\eg, FREEDOM, MMSSL) outperform traditional collaborative filtering models, they typically aggregate multimodal information on separate $\kappa$NN graphs rather than utilizing the augmented graph. Furthermore, these methods adopt equally weighted convolution layers due to conventional practices, neglecting the spectral influence on collaborative convolution. By leveraging SSC, we extend the applicability of spectral collaborative filtering methods to multimodal recommendation, allowing a shifted spectrum to align with the full-spectrum assumption. Not only does this approach extract valuable multimodal information, achieving improvements of 2.86\% to 23.33\%, but it also surpasses the performance of these baselines.
    \item 
    Similarly, none of the social recommendation baselines consider aligning the graph with the filter they employ. On the LastFM dataset, social recommendation methods (\eg, SHaRe, DSL) perform worse than LightGCN, indicating that the social relationships in this dataset are less informative. Although SHaRe incorporates an adaptive graph rewriting strategy, the results remain unsatisfactory. However, SSC enhances both LightGCN and JGCF by adjusting the spectrum to extract useful side information, resulting in improvements ranging from 1.10\% to 15.71\%.
    \item 
    GNN-based recommendation systems are often computationally expensive in industrial applications, and state-of-the-art models are even less practical due to their high resource consumption, as indicated by our measurements. In contrast, SSC preprocesses the convolution graph prior to training, achieving nearly the same computational efficiency as the backbone model. This is typically less complex than models specifically designed for multimodal or social recommendation. Consequently, SSC can be seamlessly integrated into other spectral methods to exploit various types of side information effectively.
\end{itemize}

\subsection{SSC Visualization (RQ3)}


In this section, we employ JGCF as the backbone to illustrate the impact of SSC on $\adjplus$. 
The shifting and scaling factors are jointly tuned under $\kappa = 0, 5, 10, 20$, with the optimal hyperparameter configuration determined based on NDCG@20 performance on the validation set.
Subsequently, the spectrum of $\adjplus(\kappa)$ after applying spectrum shift correction is visualized in Figure~\ref{fig-SSC-visualization} using green lines. 
While the original oracle spectrum shifts rightward, as indicated by the red lines, the adjusted spectra under various $\kappa$ values are more closely aligned within the range of $[-1, 1]$. 
This alignment more closely adheres to the full-spectrum assumption, enabling the backbone model’s robust filtering mechanism to be effectively leveraged for collaborative filtering  while incorporating side information.

\subsection{Robustness of SSC Against Noise (RQ4)}
\label{section-exp-noise}

\begin{figure}[t]
  \centering
  \includegraphics[width=0.47\textwidth]{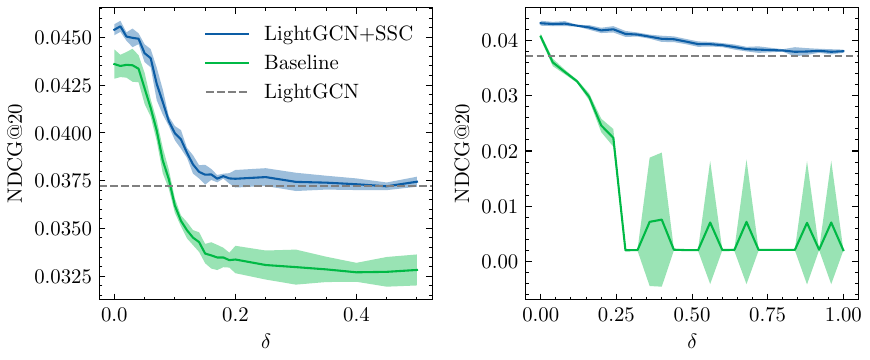}
  \caption{
    Model performance under progressively increasing levels of noise intensity $\delta$.
  }
  \vspace{-1em}
  \label{fig-noise}
\end{figure}

In this section, we examine the noise-resistance capability of SSC in comparison to typical GNN-based methods in both multimodal and social recommendation scenarios. 

For multimodal recommendation, we choose FREEDOM as a baseline for comparison, as it employs two separate branches for collaborative convolution and multimodal convolution. 
Gaussian noise $\epsilon \sim N(0, \delta)$ is added to the raw multimodal features, with $\delta$ increasing from 0 to 0.5. 
For social recommendation, SHaRe is selected as the baseline, which aggregates different types of information on distinct graphs. 
To simulate noise, edges in the graph are randomly altered to fake ones, with a fake edge ratio $\delta$ varying within $[0, 1]$.
We retain the same hyperparameters as optimized in Section~\ref{overall-performance} and introduce noise into the system. Each experiment is conducted five times to ensure reliability.

As illustrated in Figure~\ref{fig-noise}, LightGCN equipped with SSC demonstrates strong robustness, maintaining performance nearly identical to the backbone, even under conditions of high noise intensity. 
In contrast, FREEDOM experiences significant performance degradation when the noise intensity $\delta$ exceeds approximately 0.15. 
This issue likely arises from the parallel structure of collaborative and multimodal convolution. 
When the multimodal branch becomes inundated with noise, the inability to effectively balance these two information sources can severely compromise the final representation. 
However, our SSC employs a unified convolution framework, which prevents it from being overly influenced by excessive noise within the side information.
The robustness challenge is even more pronounced in SHaRe, as it incorporates self-supervised loss and leverages existing social relationships to guide social graph rewriting. 
Such processes may unintentionally generate additional noise, leading to a detrimental feedback loop. 

\subsection{Sensitivity Analysis}

\begin{table}[]
\centering
\caption{Performance under different values of $\mu$ ($\;^*$ indicates estimated value). }
\begin{tabular}{c|cccc}
\hline
$\mu$ & R@10 & R@20 & N@10 & N@20 \\
\midrule
0 & 0.0663 & 0.1033 & 0.0359 & 0.0454 \\
0.05 & \textbf{0.0669} & 0.1038 & \textbf{0.0361} & 0.0455 \\
0.1 & 0.0665 & 0.1043 & 0.0360 & 0.0457 \\
0.1203* & 0.0663 & \textbf{0.1045} & 0.0360 & \textbf{0.0458} \\
0.15 & 0.0652 & 0.1034 & 0.0353 & 0.0451 \\
0.2 & 0.0643 & 0.1019 & 0.0348 & 0.0444 \\
\bottomrule
\end{tabular}
\label{tab:baby-mu}
\end{table}

\begin{table}[]
\centering
\caption{Performance under different values of $\Delta$ ($\;^*$ indicates estimated value).}
\begin{tabular}{c|cccc}
\toprule
$\Delta$ & R@10 & R@20 & N@10 & N@20 \\
\midrule
1 & 0.0654 & 0.1032 & 0.0355 & 0.0452 \\
0.9 & 0.0657 & 0.1034 & 0.0358 & 0.0454 \\
0.8 & \textbf{0.0667} & 0.1038 & 0.0361 & \textbf{0.0456} \\
0.8797* & 0.0659 & \textbf{0.1042} & 0.0357 & 0.0455 \\
0.7 & 0.0666 & 0.1031 & \textbf{0.0362} & 0.0455 \\
0.6 & 0.0657 & 0.1006 & 0.0356 & 0.0446 \\
\bottomrule
\end{tabular}
\label{tab:baby-delta}
\end{table}

We present the sensitivity experiment results on the Baby dataset in Tables~\ref{tab:baby-mu} and~\ref{tab:baby-delta}, which also demonstrate the robustness of our rough estimation method introduced in 
Algorithm~\ref{alg-factor}.

\section{Related Work}

\subsection{Graph Collaborative Filtering}

GNNs, particularly spectral GNNs, 
exhibit superiority in collaborative filtering due to the bipartite graph structure inherent in the interaction data.
Since LightGCN~\cite{he2020lightgcn} simplifies NGCF~\cite{wang2019ngcf} by eliminating all trainable components except for the embeddings, 
more and more researchers have begun to recognize the significance of the spectrum~\cite{mao2021ultragcn,peng2022less}.
\citet{shen2021smoothing} unified a series of classic works from a spectral perspective, 
highlighting the low-frequency enhancement function of LightGCN.
Furthermore, the highest-frequency signals are mined in BSPM~\cite{choi2023blurring} and JGCF~\cite{guo2023jgcf} through graph diffusion and Jacobi polynomial bases, respectively.
However, these spectral GNNs are constrained to the bipartite interaction graph that adheres to the full-spectrum assumption. 
For the augmented graph examined in this paper, the underlying graph filters need to be modified to achieve optimal performance.

It is worth noting that spectral GNNs in the traditional graph community have evolved in a markedly different direction.
While early variants, 
such as APPNP~\cite{gasteiger2018appnp}, 
heuristically design graph filters, 
subsequent efforts focus on learning the coefficients automatically. 
Notable examples include ChebyNet~\cite{defferrard2016chebynet,he2022convolutional}, GPR-GNN~\cite{chien2021gprgnn}, JacobiConv~\cite{wang2022jacobiconv}, FavardGNN~\cite{guo2023favard}.
However, spectral GNNs with learnable coefficients encounter training difficulties and instability~\cite{he2020lightgcn,xu2023stablegcn}.
Consequently, the extension of graph collaborative filtering to the augmented graph requires further consideration.

\subsection{Multimodal Recommendation}
Modern media, enriched with diverse modalities (\eg, textual descriptions, visual thumbnails, etc.), 
offer highly valuable information for real-world recommendation tasks.
Significant efforts have been made to integrate multimodal features into the traditional collaborative filtering framework. 
For instance, VBPR~\cite{he2015vbpr} combines visual features extracted by a pretrained CNN for prediction. 
Moreover, MMSSL~\cite{wei2023mmssl} has emerged as one of the state-of-the-art methods through the application of self-supervised learning.

GNNs have also become indispensable components for integrating multimodal information.
MMGCN~\cite{wei2019mmgcn} directly utilize raw modal features by aggregating neighbor representations for each modality.
However, these raw features often contain noise that is unrelated to recommendation~\cite{xu2024stair}.
LATTICE~\cite{zhang2021lattice} constructs the item-item $\kappa$NN similarity graph and dynamically refines it over epochs, 
as this approach ensures that only the most relevant relationships are preserved.
The subsequent work, FREEDOM~\cite{zhou2023freedom}, 
simplifies the $\kappa$NN graph and introduces a standard baseline comprising two distinct convolutional branches.
However, their generalizability is constrained, 
requiring specific modifications to adapt to tasks with diverse characteristics, statistical properties, and signal-to-noise ratios.
Recently, GUME~\cite{lin2024gume} attempted to incorporate multimodal information by utilizing an augmented graph, 
similar to the one we defined in Section~\ref{sec_augmented_graph}. 
Notably, there is a growing trend toward leveraging side information in a structured and integrated manner. 
However, GUME empirically adopts an equally weighted convolution framework, 
which fails to align with the spectrum of the augmented graph. 
In this work, our analysis of the oracle spectrum importance reveals this discrepancy, 
and we propose a spectrum shift correction approach to address this issue.

\subsection{Social Recommendation}

In addition to multimodal features, social relationships serve as valuable side information frequently utilized in real-world recommendation scenarios. 
The primary task addressed in this work remains the prediction of user-item behaviors based on known user-item interaction histories and user-user relationships. 
Although non-graph-based methods such as SoRec~\cite{ma2008sorec}, Social Regularization~\cite{ma2011socreg}, and TrustMF~\cite{liu2013trustmf} exist, 
GNNs dominate this area due to the inherently graph-structured nature of social networks.
GraphRec~\cite{fan2019graphrec} first applies graph convolution to social recommendation.
Several subsequent approaches~\cite{liao2022sociallgn}~\cite{huang2021kgcn}~\cite{long2021smin} simplifies convolution framework and explore better convolutional ordering in terms of user-user and user-item graphs.
For instance, DiffNet++~\cite{wu2020diffnet++} models a social diffusion process on the heterogeneous social network
and an interest diffusion process on the bipartite interaction graph, respectively.
Also, some methods intend to extract deeper information from graph topologies.
MHCN~\cite{yu2021mhcn} constructs hypergraph based on the motifs derived from the social and purchase relationships.
In view of the significant noise in social relationships, 
SHaRe~\cite{jiang2024share} refines the social graph by removing edges with low embedding similarity and adding edges with high similarity. 
As a representative self-supervised learning-based approach, 
DSL~\cite{wang2023dsl} constructs a social view and an interaction view based on the respective graphs. 
These methods empirically extend the LightGCN convolution to social recommendation without accounting for the spectrum shift phenomenon.
In contrast to the progressively more complex designs, 
we propose a universal solution by equipping existing spectral GNNs with a correction mechanism to match the shifted spectrum. 
This approach is significantly simpler and more effective.

\section{Conclusion and Future Work}

\textbf{Conclusion.}
We empirically and theoretically identify the spectrum shift phenomenon observed in the augmented graph, 
which hinders existing spectral graph collaborative filtering methods from fully exploiting the instrumental graph-structured side information.
The proposed Spectrum Shift Correction (SSC) operations enable spectral GNNs to adapt to the shifted spectrum.
SSC has been demonstrated to be effective, efficient, and robust, 
yielding significant performance gains across both multimodal and social recommendation scenarios.

\noindent\textbf{Limitation and future work.}
SSC is straightforward to implement 
and demonstrates remarkable robustness and generalizability across diverse tasks. 
However, it still necessitates the tuning of two additional hyperparameters.
Enabling graph filters to be learnable in collaborative filtering is reserved for future work as a more sophisticated and elegant solution.

\section{Acknowledgments}
This work was supported in part by National Natural Science Foundation of China (No. 92270119) and East China Normal University Academic Excellence Program for Future Leading Talents (Advanced Project, Grant No.2025ZY-012).

We thank all reviewers for their insightful feedback during the rebuttal period. 
We are especially grateful to Reviewer VXX2 for inspiring our methodology for hyperparameter estimation via the eigenvalue algorithm.

\bibliographystyle{ACM-Reference-Format}
\bibliography{refs}

\appendix

\section{Oracle Spectra}

We plot the oracle spectrum importance across different values of $\kappa$ on the other 3 datasets. 
As illustrated in Figure~\ref{fig-oracle-spectrum-appendix}, all of them exhibit the spectrum shift phenomenon.

\begin{figure*}[t]
    \centering
    \hspace*{0.01\textwidth}
    \subfloat[]{\includegraphics[width=0.3\textwidth]{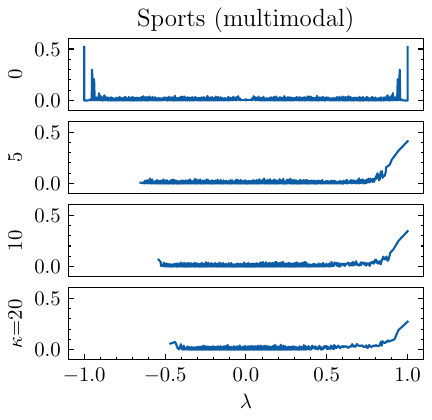}} \hfill
    \subfloat[]{\includegraphics[width=0.3\textwidth]{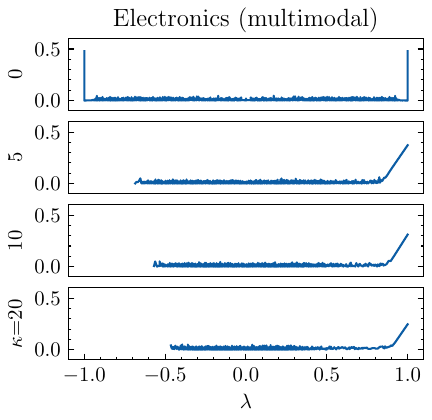}} \hfill
    \subfloat[]{\includegraphics[width=0.3\textwidth]{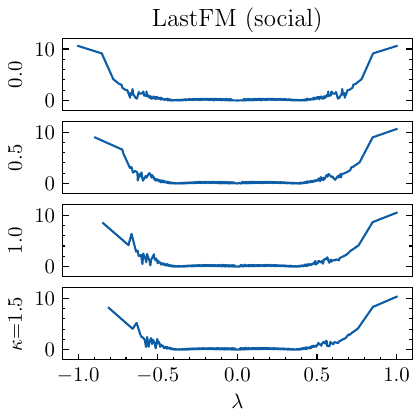}} 
    \hspace*{0.02\textwidth}
    
  \caption{
    Oracle spectrum importance $\mathcal{R}(U_{:, \lambda}; B)$ on other 3 datasets.
  }
  \label{fig-oracle-spectrum-appendix}
\end{figure*}

\section{Proofs}

\subsection{Full-Spectrum Assumption}
\label{appendix-proof-fsa}

For the sake of completeness, 
we provide a proof that the full-spectrum property holds for any bipartite graph:
\begin{equation}
   \adj = \left 
   [ \begin{matrix}
  0 & \tilde{R} \\
  \tilde{R}^T & 0 \\
  \end{matrix} \right ] \in \mathbb{R}^{(|\mathcal{U}| + |\mathcal{I}|) \times (|\mathcal{U}| + |\mathcal{I}|)},
\end{equation}
where $\tilde{R} = D_U^{-1/2} R D_I^{-1/2}$
and $D_U = \text{diag}(R \mathbf{1}_{|\mathcal{I}|}), D_I = \text{diag}(R^T \mathbf{1}_{|\mathcal{U}|})$.
Note that the proof of $-1 \leq \lambda \leq 1$ can be found in standard textbooks in terms of spectral graph~\cite{spielman2019spectral}. 
Here, we aim to demonstrate that the minimum and maximum eigenvalues are indeed achievable.
Define the following vectors
\begin{equation}
  v_+ = 
  \left[
    \begin{matrix}
      D_U^{1/2} \mathbf{1}_{|\mathcal{U}|} \\
      D_I^{1/2} \mathbf{1}_{|\mathcal{I}|} \\
    \end{matrix}
  \right],
  \quad
  v_- = 
  \left[
    \begin{matrix}
      D_U^{1/2} \mathbf{1}_{|\mathcal{U}|} \\
      -D_I^{1/2} \mathbf{1}_{|\mathcal{I}|} \\
    \end{matrix}
  \right].
\end{equation}
On the one hand,
$v_+$ is the unnormalized eigenvector corresponding to $\lambda_{\max} = 1$ since
\begin{align}
  \adj v_+
  & = 
  \left [ 
    \begin{matrix}
    0 & \tilde{R} \\
    \tilde{R}^T & 0 \\
    \end{matrix} 
  \right ]
  \left [
      \begin{matrix}
        D_U^{1/2} \mathbf{1}_{|\mathcal{U}|} \\
        D_I^{1/2} \mathbf{1}_{|\mathcal{I}|} \\
      \end{matrix}
  \right ] \\
  & = 
  \left [ 
    \begin{matrix}
    \tilde{R}D_I^{1/2} \mathbf{1}_{|\mathcal{I}|} \\
    \tilde{R}^T D_U^{1/2} \mathbf{1}_{|\mathcal{U}|} \\
    \end{matrix} 
  \right ]
  = 
  \left [ 
    \begin{matrix}
    D_U^{-1/2} R \mathbf{1}_{|\mathcal{I}|} \\
    D_I^{-1/2} R^T \mathbf{1}_{|\mathcal{U}|} \\
    \end{matrix} 
  \right ] \\
  &=
  \left [ 
    \begin{matrix}
    D_U^{-1/2} D_U \mathbf{1}_{|\mathcal{U}|} \\
    D_I^{-1/2} D_I \mathbf{1}_{|\mathcal{I}|} \\
    \end{matrix}
  \right ] 
  =
  \left [ 
    \begin{matrix}
    D_U^{1/2} \mathbf{1}_{|\mathcal{U}|} \\
    D_I^{1/2} \mathbf{1}_{|\mathcal{I}|} \\
    \end{matrix}
  \right ]  \\
  &= 1 \cdot v_+.
\end{align}
On the other hand, we have
\begin{align}
  \adj v_-
  & = 
  \left [ 
    \begin{matrix}
    0 & \tilde{R} \\
    \tilde{R}^T & 0 \\
    \end{matrix} 
  \right ]
  \left [
      \begin{matrix}
        D_U^{1/2} \mathbf{1}_{|\mathcal{U}|} \\
        -D_I^{1/2} \mathbf{1}_{|\mathcal{I}|} \\
      \end{matrix}
  \right ] \\
  & = 
  \left [ 
    \begin{matrix}
    -\tilde{R}D_I^{1/2} \mathbf{1}_{|\mathcal{I}|} \\
    \tilde{R}^T D_U^{1/2} \mathbf{1}_{|\mathcal{U}|} \\
    \end{matrix} 
  \right ]
  = 
  \left [ 
    \begin{matrix}
    -D_U^{-1/2} R \mathbf{1}_{|\mathcal{I}|} \\
    D_I^{-1/2} R^T \mathbf{1}_{|\mathcal{U}|} \\
    \end{matrix} 
  \right ] \\
  &=
  \left [ 
    \begin{matrix}
    -D_U^{-1/2} D_U \mathbf{1}_{|\mathcal{U}|} \\
    D_I^{-1/2} D_I \mathbf{1}_{|\mathcal{I}|} \\
    \end{matrix}
  \right ] 
  =
  \left [ 
    \begin{matrix}
    -D_U^{1/2} \mathbf{1}_{|\mathcal{U}|} \\
    D_I^{1/2} \mathbf{1}_{|\mathcal{I}|} \\
    \end{matrix}
  \right ] \\
  &= -1 \cdot v_-.
\end{align}

\subsection{Spectrum Shift}
\label{appendix-proof-ss}

\begin{proof}[The proof of Theorem~\ref{thm-spectrum-shift}:]
  Firstly, $\lambda_{\max} \equiv 1$ is a fixed point for any symmetric sqrt normalized adjacency matrix because
  \begin{align*}
    \adjplus D_{A_+}^{1/2} \mathbf{1}_{|\mathcal{U}| + |\mathcal{I}|}
    &= D_{A_+}^{-1/2} A_+ D_{A_+}^{-1/2} D_{A_+}^{1/2}  \mathbf{1}_{|\mathcal{U}| + |\mathcal{I}|} \\
    &= D_{A_+}^{-1/2} A_+  \mathbf{1}_{|\mathcal{U}| + |\mathcal{I}|} \\
    &= D_{A_+}^{-1/2} D_{A_+} \mathbf{1}_{|\mathcal{U}| + |\mathcal{I}|} \\
    &= 1 \cdot D_{A_+}^{1/2} \mathbf{1}_{|\mathcal{U}| + |\mathcal{I}|},
  \end{align*}
  where $D_{A_+} = \text{diag}(A_+ \mathbf{1}_{|\mathcal{U}| + |\mathcal{I}|})$.
  On the other hand,
  as $\kappa \rightarrow +\infty$, the augmented adjacency matrix reduces to
  \begin{equation}
    \adjplus \left \{
      \begin{array}{ll}
        = \left [
        \begin{matrix}
          \tilde{S} & 0 \\
          0 & 0
        \end{matrix}
        \right ],
         & \text{if } \kappa\text{-rescaling}, \\
        \approx \left [
        \begin{matrix}
          \frac{1}{|\mathcal{U}|} \mathbf{1}_{|\mathcal{U}|} \mathbf{1}_{|\mathcal{U}|}^T & 0 \\
          0 & 0
        \end{matrix}
        \right ],
         & \text{if } \kappa\text{-nearest-neighbors}.
      \end{array}
    \right .
  \end{equation}
  Note that the $\kappa$-nearest-neighbors case approximately holds true because for $\kappa$NN graphs, 
  $\kappa$ cannot be greater than the number of nodes (\ie, $\kappa \le |\mathcal{U}|$).
  However, in real-world recommendation scenarios, $\kappa$ could be significantly large due to the involvement of millions of nodes.
  Consequently, under $\kappa$-scaling, 
  the eigenvalues consist of the eigenvalues of $\tilde{S}$ along with an additional zero.
  As for the $\kappa$-nearest-neighbors case, the submatrix $\frac{1}{|\mathcal{U}|} \mathbf{1}_{|\mathcal{U}|} \mathbf{1}_{|\mathcal{U}|}^T$ is 
  positive semidefinite with two distinct eigenvalues: 0 and 1.
\end{proof}

\section{Experiment Details}

\subsection{Graph-structured Side Information Preprocessing}
\label{appendix-preprocessing}

This study involves two distinct graph construction approaches,
namely $\kappa$-rescaling for $S_U(\kappa)$ and $\kappa$-nearest-neighbors for $S_I(\kappa)$.
\begin{itemize}[leftmargin=*]
  \item 
  $\kappa$-rescaling multiplies the original social network $S_U$ by a coefficient $\kappa \in \mathbb{R}$, allowing the significance to be effectively controlled.
  Here, $S_U$ includes an edge between two users if certain social relationship of interest, such as friendships, exists between them.
  \item
  $\kappa$-nearest-neighbors is widely used in multimodal collaborative filtering~\cite{zhang2021lattice,zhou2023freedom} for constructing $\kappa$NN similarity graphs. Typically,
  \begin{equation*}
    S_I(\kappa) := \frac{1}{|\mathcal{M}|} \sum_{m \in \mathcal{M}} w_m \cdot S_m(\kappa), \quad \kappa=0, 1, 2, \ldots,
  \end{equation*}
  where $S_m(\kappa), m \in \mathcal{M}$ is the $\kappa$NN similarity graph corresponding the modality $m$,
  and $w_m \in [0, 1]$ denotes the respective weight.
  Here, the value of $\kappa$ signifies that $\kappa$ nearest nodes are selected as neighbors for each item.
  Given the raw encoded multimodal features, item similarity is typically calculated using the inner product between these features.
\end{itemize}

\subsection{Hyperparameter Settings}

\begin{table}[]
\centering
\caption{Hyperparameters used for different backbones on each dataset.}
\begin{tabular}{l|ccc|ccc}
\toprule
 & \multicolumn{3}{c|}{LightGCN} & \multicolumn{3}{c}{JGCF} \\
 & $\kappa$ & $\mu$ & $\Delta$ & $\kappa$ & $\mu$ & $\Delta$ \\
\midrule
Baby       & 10       & 0.05       & 0.75       & 5       & 0.05       &  0.85      \\
Sports       & 5       &  0.1      &  0.7      &  5      &  0.05      &  0.85      \\
Electronics      &  10      & 0.0       &  0.55      &  5      &  0.0      &  0.75      \\
\midrule
Ciao       & 0.75       & 0.15       &  0.4      &  0.75      &  0.0      &  0.55      \\
LastFM       & 1.0       & 0.0       & 0.5       &  0.75      &  0.0      &  1.0      \\
\bottomrule
\end{tabular}
\label{tab:hyperparams}
\end{table}

Table~\ref{tab:hyperparams} summarizes the tuned hyperparameters across various dataset and model combinations.

\section{Additional Experiments}

\subsection{Other Baselines and Backbones}

\begin{table}[]
  \centering
  \caption{
    Additional performance comparison for more baselines and more backbones.
  }
  \label{table-performance-additional}
  \scalebox{.8}{
    \setlength{\tabcolsep}{1.2mm}{
    \begin{tabular}{l|cccc|cccc}
    \toprule
    & \multicolumn{4}{c|}{Baby} & \multicolumn{4}{c}{Ciao} \\
     & R@10 & R@20 & N@10 & N@20 & R@10 & R@20 & N@10 & N@20 \\
    \midrule
    GF-CF     & 0.0350 & 0.0629 & 0.0130 & 0.0202 & 0.0014 & 0.0028 & 0.0007 & 0.0012 \\
    SVD-GCN   & 0.0513 & 0.0802 & 0.0278 & 0.0352 & 0.0471 & 0.0740 & 0.0299 & 0.0378 \\
    \midrule
    SMORE     & 0.0625 & 0.0946 & 0.0344 & 0.0427 & - & - & - & - \\
    STAIR     & 0.0674 & 0.1042 & 0.0359 & 0.0453 & - & - & - & - \\
    \midrule
    \midrule
    LightGCN  & 0.0543 & 0.0851 & 0.0293 & 0.0372 & 0.0473 & 0.0770 & 0.0295 & 0.0381 \\
    \quad \quad +SSC & 0.0666 & 0.1044 & 0.0361 & 0.0458 & 0.0547 & 0.0861 & 0.0340 & 0.0431 \\
    \multicolumn{1}{c|}{$\blacktriangle$\%} &22.61\%	&22.69\%	&23.33\%	& 23.20\% &15.71\%	&11.80\%	&15.30\%	& 13.06\%\\
    \midrule
    AFDGCF    & 0.0552 & 0.0861 & 0.0297 & 0.0377 & 0.0483 & 0.0779 & 0.0301 & 0.0387 \\
    \quad \quad +SSC & 0.0676 & 0.1037 & 0.0366 & 0.0459 & 0.0536 & 0.0851 & 0.0329 & 0.0421 \\
    \multicolumn{1}{c|}{$\blacktriangle$\%} &22.61\%	&20.48\%	&23.26\%	& 21.79\% &10.85\%	&9.15\%	&9.41\%	& 8.87\%\\
    \midrule
    SGL       & 0.0550 & 0.0861 & 0.0299 & 0.0379 & 0.0475 & 0.0759 & 0.0296 & 0.0380 \\
    \quad \quad +SSC & 0.0673 & 0.1024 & 0.0363 & 0.0454 & 0.0540 & 0.0861 & 0.0336 & 0.0429 \\
    \multicolumn{1}{c|}{$\blacktriangle$\%} &22.43\%	&18.86\%	&21.66\%	& 19.78\% &13.84\%	&13.36\%	&13.52\%	& 13.06\%\\
    \midrule
    SimGCL    & 0.0513 & 0.0802 & 0.0278 & 0.0352 & 0.0471 & 0.0740 & 0.0299 & 0.0378 \\
    \quad \quad +SSC & 0.0676 & 0.1048 & 0.0366 & 0.0462 & 0.0533 & 0.0852 & 0.0334 & 0.0427 \\
    \multicolumn{1}{c|}{$\blacktriangle$\%} &22.12\%	&21.21\%	&23.18\%	& 22.29\% &8.02\%	 &8.52\%	&9.27\%	& 9.24\%\\
    \bottomrule
    \end{tabular}
    }
}
\end{table}

To comprehensively demonstrate the superiority of SSC over competitive baselines, we compare it with several representative methods, including direct eigenvalue decomposition (GF-CF~\cite{shen2021smoothing}, SVD-GCN~\cite{peng2022svdgcn}) and recent state-of-the-art multimodal recommendation models (SMORE~\cite{ong2025smore}, STAIR~\cite{xu2024stair}). 
To further showcase its generalizability, we apply SSC to several alternative backbone architectures (AFDGCF~\cite{wu2024afdgcf}, SGL~\cite{wu2021sgl}, SimGCL~\cite{yu2022simgcl}), validating its capability to provide orthogonal improvements across different model designs. 
The results are shown in Table~\ref{table-performance-additional}.

\subsection{Larger datasets}

\begin{table}[]
\centering
\caption{Dataset statistics.}
\label{table:additional-datasets}
\begin{tabular}{ccccc}
  \toprule
            & \#Users & \#Items & \#Interactions  \\
  \midrule
    Books        & 603,668   & 367,982  & 8,898,041  \\
    Yelp         & 186,328   & 94,570   & 2,127,869  \\
  \bottomrule
\end{tabular}
\end{table}

\begin{table}[]
\centering
\caption{Performance comparison on Books.}
\begin{tabular}{l|cccc}
\toprule
 & R@10 & R@20 & N@10 & N@20 \\
\midrule
FREEDOM    & OOM    & OOM    & OOM    & OOM    \\
SMORE      & OOM    & OOM    & OOM    & OOM    \\
\midrule
LightGCN   & 0.0596 & 0.0878 & 0.0348 & 0.0423 \\
\quad \quad +SSC  & 0.0632 & 0.0932 & 0.0369 & 0.0449 \\
\multicolumn{1}{c|}{$\blacktriangle$\%}  & 6.04\% & 6.15\% & 6.03\% & 6.15\% \\
\bottomrule
\end{tabular}
\label{tab:additional-exp-books}
\end{table}

\begin{table}[]
\centering
\caption{Performance comparison on Yelp.}
\begin{tabular}{l|cccc}
\toprule
 & R@10 & R@20 & N@10 & N@20 \\
\midrule
DSL        & 0.0674 & 0.1055 & 0.0367 & 0.0468 \\
SHaRe      & OOM    & OOM    & OOM    & OOM    \\
\midrule
LightGCN   & 0.0645 & 0.1010 & 0.0352 & 0.0448 \\
\quad \quad +SSC  & 0.0689 & 0.1079 & 0.0379 & 0.0483 \\
\multicolumn{1}{c|}{$\blacktriangle$\%}  & 6.87\% & 6.87\% & 7.69\% & 7.83\% \\
\bottomrule
\end{tabular}
\label{tab:additional-exp-yelp}
\end{table}

We evaluate SSC on two larger datasets: Amazon Books\footnote[1]{https://github.com/enoche/MMRec}~\cite{he2016ups} for multimodal recommendation and Yelp\footnote[2]{https://business.yelp.com/data/resources/open-dataset} for social recommendation.
Dataset statistics are presented in Table~\ref{table:additional-datasets}, and the corresponding results are reported in Tables~\ref{tab:additional-exp-books} and~\ref{tab:additional-exp-yelp}.

\end{document}